\theoremstyle{plain}%
\newtheorem{theorem}{Theorem}%
\newtheorem{proposition}[theorem]{Proposition}% 
\newtheorem{lemma}[theorem]{Lemma}%
\newcommand{\suchthat}{\; : \;}
\DeclareMathOperator{\triangleup}{\vartriangle}
\providecommand{\bigsqcap}{%
	\mathop{%
		\mathpalette\@updown\bigsqcup
	}%
}
\newcommand*{\@updown}[2]{%
	\rotatebox[origin=c]{180}{$\m@th#1#2$}%
}
\Crefname{theorem}{Theorem}{Theorems}%
\Crefname{lemma}{Lemma}{Lemmas}%
\Crefname{proposition}{Proposition}{Propositions}%
\crefname{theorem}{theorem}{theorems}%
\crefname{lemma}{lemma}{lemmas}%
\crefname{proposition}{proposition}{propositions}%
\begin{document}

\title{On the enumeration of Tarski fixed points\thanks{A Java implementation of the enumeration algorithms is available in the netroles library~\cite{m-n-24}.}}

\author{Julian M\"uller}

\affil{Social Networks Lab, ETH Z\"urich\\Institute of Computing, Universit\`{a} della Svizzera italiana\\julian.mueller@gess.ethz.ch}

\date{April 26, 2026}

\maketitle

\begin{abstract}
	We study the problem of enumerating Tarski fixed points on finite lattices. We derive query complexity lower bounds for finding three or more Tarski fixed points of isotone maps and the subclasses of increasing and decreasing isotone maps. Specifically, we show that any deterministic or bounded-error algorithm must perform asymptotically as many queries in the worst case as the lattice width for isotone maps, which is exponential for the lattice of binary relations and other relevant lattices.
	
	We also present two enumeration algorithms for fixed points of increasing or decreasing isotone maps based on depth-first and flashlight search. Both algorithms run in polynomial space on polynomial-height lattices, but are particularly suitable in terms of applicability and runtime performance on different lattices, as they build on differing properties of the underlying lattice.
	
	Finally, we discuss the enumeration of Tarski fixed points on the lattices of binary relations, quasiorders and equivalences, demonstrating that the presented algorithms run in polynomial space on these lattices and perform with polynomial delay whenever the problem of finding three or more fixed points is neither NP-hard nor has an exponential query lower bound. We exemplify how these results can be used to list instances of various models of behavioral or role equivalence, specifically deriving a polynomial-space algorithm that enumerates bisimulations with $\mathcal O(n^3m)$ delay on a transition system with $n$ states and $m$ transitions.
\end{abstract}

\section{Introduction}\label{section:introduction}

Models of behavioral or role equivalence, like the linked concepts of  bisimulation~\citep{m-cc-89,p-cais-81}, structural relatedness~\citep{s-se-78} and regular equivalence~\citep{wr-gshnr-83}, have been used to describe alike states in labeled transition systems or compare the structural similarity of actors in social networks. Instances of such models can commonly be described as fixed points of an efficiently computable isotone map on the underlying lattice~\citep{m-cc-89,mb-er-22,s-obc-09}, and so the existence of minimum or maximum instances as well as the lattice structure of instances follow from the renowned fixed point theorem of Tarski~\citep{t-lftia-55}.

Since the maximal and minimal instances are often trivial or non-informative on structures like undirected graphs~\citep{be-care-89}, social scientists have shown interest in finding other fixed points as potential salient representations of network structure as well~\citep{bdf-oare-92,bbe-ir-89,be-care-89}.

In this article, we study the computational problem of listing such Tarski fixed points on finite lattices in general, but also on the relational lattices of equivalences, quasiorders and binary relations specifically, as these relational lattices commonly underlie models of behavioral or role equivalence (e.g., regular equivalence~\cite{wr-gshnr-83} are equivalences, and simulation preorders~\cite{bks-spsr-12} are quasiorders, and bisimulations~\cite{m-cc-89} are binary relations).

\subparagraph*{Contributions and outline.} We examine the hardness of the Tarski fixed point enumeration problem in \Cref{section:hardness} and show new lower bounds on finding three or more fixed points in terms of query complexity; specifically, we prove that the number of queries is lower-bounded by the width of the lattice for fixed points on isotone maps, and similarly obtain lower bounds for the subclasses of increasing and decreasing isotone maps.

In \Cref{section:algorithm}, we suggest two suitable algorithms for enumerating Tarski fixed points of non-increasing and non-decreasing isotone maps based on depth-first (which generally improves on a similar algorithm proposed by Echenique~\citep{e-faegs-07}) and flashlight search (see, e.g., \cite{ms-eespco-16,rt-bbalcpst-75}). These enumeration algorithms run in polynomial space on polynomial-height lattices, but pose differing structural requirements and have different performance characteristics, making them suitable for different lattices and subclasses of isotone maps. In \Cref{section:relationallattices}, we apply both of them in the appropriate cases to the lattices of equivalences, quasiorders and binary relations, and also transfer the query complexity results to these lattices. Whenever the query complexity bounds and algorithms' runtime bounds are super-polynomial on these lattices, we also prove that finding three or more fixed points is NP-hard. Furthermore, we discuss the application and implications of these results to bisimulation and other models of behavioral equivalence; for example, we obtain a new polynomial-space algorithm for listing bisimulations that runs with a delay of $\mathcal O(n^3m)$ on transition systems with $n$ states and $m$ transitions. Finally, we conclude in \Cref{section:conclusion}.

\subparagraph*{Related research.} Motivated by the observation that Nash equilibria of super-modular games are Tarski fixed points, there has been recent research on the computation of Tarski fixed points on $d$-dimensional grids $[N]^d$ with $N, d \in \mathbb{N}^+$. Notably, various algorithms to find Tarski fixed points on these grids %under coordinate-wise ordering
have been proposed~\citep{cl-iubft-22,dqy-cctfp-20,fps-faftf-22}, among which the asymptotically fastest requires $\mathcal O(\log^{\lceil(d+1)/2\rceil} N)$~queries to the isotone  map~\citep{cl-iubft-22}. In turn, a $\Omega(\log^2 N)$ lower bound was shown for the two-dimensional grid in \citep{epry-ttsgc-20}, and any established lower bounds must also apply if only isotone maps with unique fixed points are considered~\citep{cly-rtutbm-23}. Moreover, deciding the uniqueness of a fixed point on a $d$-dimensional grid is known to be coNP-hard %, even if $d=1$
\citep{dqy-cctfp-20,dy-cdutf-10}. 
Finally, the Tarski fixed point problem has also been examined in the case where the partial order of the underlying lattice is given as an oracle \citep{clt-ctfpt-08}.

Note that it has major complexity implications that grid lattices have exponential height in the size of a lattice element: While finding the least and greatest fixed points of polynomial-time computable isotone maps is an NP-hard problem on the $d$-dimensional grid \citep{epry-ttsgc-20}, they can be found in polynomial time simply by naive iterated application of the isotone map on the lattice of binary relations and other lattices of polynomial height.

Echenique~\cite{e-faegs-07} proposed an algorithm to enumerate Nash equilibria of super-modular games, which can be generalized to Tarski fixed points on other lattices. \Cref{section:algorithm} below presents a related but enhanced enumeration algorithm based on depth-first search, which generally provides stronger runtime and memory guarantees: For example, the depth-first search algorithm runs in polynomial space and delay on the lattice of binary relations, while an adaption of Echenique's does not.

In the literature on behavioral and role equivalence, some approaches have been proposed to list instances of regular equivalences, which are Tarski fixed points on the lattice of equivalences. Specifically, a heuristic lattice exploration method~\citep{bbe-ir-89} and a combinatorial optimization approach (generalized blockmodeling \citep{bdf-oare-92,dbf-gb-05}) have been used to find non-trivial instances. However, neither approach provides a guarantee to find non-trivial equivalences, even if some exist. Moreover, the former approach needs to repeatedly solve graph isomorphism problems, while it has been shown for the latter that the underlying $k$-role assignment problem about finding regular equivalences with exactly $k$ equivalence classes is NP-hard \citep{rs-hhidi-01,fp-cccra-05}. Note that this hardness result does not immediately imply that enumeration of regular equivalences is hard as well.

\section{Preliminaries} \label{section:preliminiaries}

In this section, we briefly introduce the concepts and notation used throughout this paper. For further reading on orderings and lattices, we suggest \cite{dp-ilo-02,g-glt-03} to the reader.

\subparagraph*{Enumeration algorithms.} 
Procedures for listing all solutions to a problem are known as enumeration algorithms. Since the number of solutions can be several magnitudes larger than the input size, the runtime of an enumeration algorithm is usually measured in reference to both input and produced output. Johnson et al.~\citep{jyp-gamis-88} suggested to measure an algorithm's \emph{delay}, the time spent between two listed solutions, and proposed that an algorithm runs with \emph{polynomial delay} if the delay is polynomial in the input size. We write $t(f)$ to denote the time required for evaluating $f$.

\subparagraph*{Orders and lattices.} A \emph{quasiorder} is a reflexive and transitive binary relation, and a \emph{partial order} is an antisymmetric quasiorder. A \emph{lattice} $(L, \preceq)$ is a set $L$ with a partial order ${\preceq} \subseteq L \times L$ which is closed under two binary operations, the \emph{meet} $x \operatorname{\sqcap} y$ and the \emph{join} $x \operatorname{\sqcup} y$, that yield the unique greatest lower bound and least upper bound of $x, y \in L$. For example, the set of binary relations on base set $[k] =\{1,\dots,k\}$ for $k \in \mathbb{N}$ together with the refinement ordering $\subseteq$ on sets forms a lattice with set union and intersection as join and meet.

All considered lattices are finite, and we respectively write $\bot$ and $\top$ for the minimum and maximum lattice elements. The amount of space required to represent a lattice element is denoted by $\operatorname{size}(L)$.

A \emph{chain} is a subset $C \subseteq L$ such that $x \preceq y$ or $y \preceq x$ for all $x, y \in C$ and has \emph{length} $|C|-1$. An \emph{antichain} is a subset $A \subseteq L$ such that any elements $x, y \in A$ with $x \neq y$ are incomparable. The \emph{height} $h(L, \preceq)$ of the lattice is the maximum length of any contained chain, while the \emph{width} $w(L, \preceq)$ is the maximum size $|A|$ of any contained antichain $A$.

If $x \preceq y$ for some $x, y \in L$, we say that $x$ \emph{precedes} $y$ and $y$ \emph{succeeds} $x$. For $x \prec y$, we say $x$ \emph{is covered by} $y$, written $x \lessdot y$, if $x \preceq z \preceq y$ implies $x = z$ or $y = z$ for all elements $z \in L$; $x$ is then a \emph{lower cover} of $y$ and $y$ an \emph{upper cover} of $x$. The number of lower and upper covers of an element $x$ in the lattice are given by $\operatorname{lcv}_{(L, \preceq)}(x)$ and $\operatorname{ucv}_{(L, \preceq)}(x)$, respectively; we omit the subscript if the lattice is clear from context.
The maximum number of lower or upper covers of any element in the lattice are written as
\begin{align*}
	\operatorname{lcv}(L,\preceq) &\vcentcolon= \max_{x \in L} \operatorname{lcv}_{(L,\preceq)}(x) &
	\operatorname{ucv}(L,\preceq) &\vcentcolon= \max_{x \in L} \operatorname{ucv}_{(L,\preceq)}(x)
\end{align*}

\subparagraph*{Isotone maps.}
A map $f: L \rightarrow L$ on a lattice $(L, \preceq)$ is \emph{isotone} if $x \preceq y$ implies $f(x) \preceq f(y)$, \emph{decreasing} if $f(x) \preceq x$, and \emph{increasing} if $x \preceq f(x)$ for all $x,y \in L$. We respectively mark decreasing maps with subscript $\triangledown$ and increasing maps with $\triangleup$ to allude to the direction which such maps move along in the lattice.

\subparagraph*{Fixed points.} An element $x \in L$ is a \emph{fixed point} of map $f: L \to L$ if $x = f(x)$. We write $\operatorname{Fix}(f)$ for the set of fixed points. Tarski's theorem then states for finite lattices:

\begin{theorem}[\citep{t-lftia-55}] \label{theorem:tarski}
	For an isotone map $f: L \to L$ on finite lattice $(L, \preceq)$, $(\operatorname{Fix}(f), \preceq)$ is a lattice such that $\bigsqcup \{ x \suchthat x \preceq f(x)\} \in \operatorname{Fix}(f)$ and $\bigsqcap \{ x \suchthat f(x) \preceq x \} \in \operatorname{Fix}(f)$.
\end{theorem}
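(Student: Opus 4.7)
The plan is to handle the two claims in the statement separately but by dual arguments, and then bootstrap from them to obtain the lattice structure of $\operatorname{Fix}(f)$. I would first establish the specific identity $\bigsqcup\{x \suchthat x \preceq f(x)\} \in \operatorname{Fix}(f)$ directly, then observe that the statement $\bigsqcap\{x \suchthat f(x) \preceq x\} \in \operatorname{Fix}(f)$ follows from the same argument applied to the order-dual lattice $(L, \succeq)$, on which $f$ is still isotone. The lattice structure of $\operatorname{Fix}(f)$ then comes from applying these two identities to suitable sublattices induced by arbitrary subsets of fixed points.

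For the first identity, let $P \vcentcolon= \{x \in L \suchthat x \preceq f(x)\}$ and $u \vcentcolon= \bigsqcup P$; note $P$ is nonempty because $\bot \in P$. For every $x \in P$, $x \preceq u$ gives $f(x) \preceq f(u)$ by isotonicity, and combining with $x \preceq f(x)$ yields $x \preceq f(u)$. Taking the join over all $x \in P$ produces $u \preceq f(u)$, so $u \in P$. Applying isotonicity once more to $u \preceq f(u)$ gives $f(u) \preceq f(f(u))$, so $f(u) \in P$ as well, whence $f(u) \preceq u$ by maximality of $u$ in $P$. Antisymmetry then yields $u = f(u)$, which is the claim. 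The dual version for $\bigsqcap\{x \suchthat f(x) \preceq x\}$ is obtained by replaying this argument verbatim in $(L, \succeq)$.

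For the lattice structure, I would show that any subset $S \subseteq \operatorname{Fix}(f)$ has a least upper bound within $\operatorname{Fix}(f)$; the greatest lower bound follows by duality. Let $v \vcentcolon= \bigsqcup S$ computed in $L$. For every $x \in S$, $x \preceq v$ implies $x = f(x) \preceq f(v)$, and taking the join gives $v \preceq f(v)$. Consider the interval sublattice $L' \vcentcolon= \{y \in L \suchthat v \preceq y\}$, which is itself a finite lattice with minimum $v$ and maximum $\top$. Because $v \preceq y$ forces $v \preceq f(v) \preceq f(y)$, the restriction $f|_{L'}$ maps $L'$ into $L'$ and remains isotone, so applying the first identity inside $L'$ produces a fixed point of $f$ in $L'$, which is an upper bound of $S$ in $\operatorname{Fix}(f)$. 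A small additional check confirms it is in fact the least such upper bound: any fixed point $w \in \operatorname{Fix}(f)$ with $w \succeq x$ for all $x \in S$ satisfies $w \succeq v$, and the minimality of the constructed fixed point inside $L'$ among all post-fixed points of $f|_{L'}$ forces it to be below $w$.

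The routine bookkeeping — verifying that the post-fixed-point set of $f|_{L'}$ has the desired minimum among upper bounds of $S$ in $\operatorname{Fix}(f)$ — is the only step that requires any care; the rest is a direct manipulation of isotonicity. I expect no genuine obstacle, since Tarski's theorem is classical, but the sublattice-restriction step is the one I would write out in full detail to keep the deduction of the lattice structure from the two supremum/infimum identities transparent.
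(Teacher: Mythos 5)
The paper states this theorem purely as a citation to Tarski and supplies no proof of its own, so there is nothing to compare against; I am judging your argument on its own terms. Your proof of the two displayed identities is the standard and correct one: for $P=\{x \suchthat x\preceq f(x)\}$ and $u=\bigsqcup P$ you correctly show that $f(u)$ is an upper bound of $P$, hence $u\preceq f(u)$, hence $f(u)\in P$, hence $f(u)\preceq u$, and antisymmetry closes the loop; the dual identity does follow by replaying this in $(L,\succeq)$. The reduction of the lattice structure of $\operatorname{Fix}(f)$ to these identities via the interval $L'=\{y \suchthat v\preceq y\}$ with $v=\bigsqcup S$ is also the right idea, and your check that $f$ restricts to an isotone self-map of $L'$ is correct.

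The one step that fails as written is the last one. Applying ``the first identity'' inside $L'$ yields $\bigsqcup\{y\in L' \suchthat y\preceq f(y)\}$, which is the \emph{greatest} fixed point of $f|_{L'}$ --- indeed the greatest fixed point of $f$ in all of $L$, since that element succeeds the post-fixed point $v$. This is an upper bound of $S$ in $\operatorname{Fix}(f)$ but in general not the least one, and your closing appeal to ``the minimality of the constructed fixed point among all post-fixed points of $f|_{L'}$'' is not a property the join construction has: it is the \emph{maximum} of the post-fixed points of $f|_{L'}$, while the minimum post-fixed point of $f|_{L'}$ is $v$ itself, which need not be fixed. The repair is to apply the \emph{second} identity inside $L'$: the element $\bigsqcap\{y\in L' \suchthat f(y)\preceq y\}$ is a fixed point of $f|_{L'}$, and since every fixed point $w$ of $f$ that dominates all of $S$ satisfies $w\succeq v$, lies in $L'$, and is in particular a pre-fixed point of $f|_{L'}$, this meet precedes every such $w$. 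With that substitution (and its dual for greatest lower bounds) your argument is complete and coincides with the classical proof.
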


\subparagraph*{Limits.} 
For a given map $f: L \rightarrow L$, we write $f^{(t)}$ for the $t$-th composition of $f$, i.e., $f^{(t)}(x) \vcentcolon= f(f^{(t-1)}(x))$ and $f^{(0)}(x) \vcentcolon= x$. We also write $f^*(x)$ for the \emph{limit of $x$ under $f$}, i.e., $f^*(x) = \lim_{t \to \infty} f^{(t)}(x)$. This limit is guaranteed to exist for decreasing and increasing maps on finite lattices; to be precise, it is equal to the maximum or minimum fixed point preceding or succeeding $x$ and reached within a polynomial number of queries to $f$ on polynomial-height lattices. The lemma below formalizes this observation.

\pagebreak
\begin{lemma} \label{lemma:limitandfixedpoints}
	\ 
	\begin{enumerate}
		\item
	For a decreasing isotone map $f_{\triangledown}: L \rightarrow L$ on a finite lattice $(L, \preceq)$, $f^*_{\triangledown}(x) = f^{(h(L, \preceq))}_{\triangledown}(x)$ is the greatest fixed point of $f_{\triangledown}$ preceding $x$.
	\item For an increasing isotone map $f_{\triangleup}: L \rightarrow L$ on a finite lattice $(L, \preceq)$, $f^*_{\triangleup}(x) = f^{(h(L, \preceq))}_{\triangleup}(x)$ is the least fixed point of $f_{\triangleup}$ succeeding $x$.
	\end{enumerate}
\end{lemma}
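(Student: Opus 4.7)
The plan is to prove statement (1) directly and deduce (2) by order-theoretic duality, since reversing the order $\preceq$ turns a decreasing isotone map into an increasing isotone map on the dual lattice, swaps least and greatest, and preserves height.

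For (1), I would first establish that the orbit $x, f_\triangledown(x), f_\triangledown^{(2)}(x), \ldots$ is weakly descending. This follows by induction on $t$: the base case is $f_\triangledown(x) \preceq x$ from the decreasing property, and the step applies isotonicity to the inductive hypothesis $f_\triangledown^{(t)}(x) \preceq f_\triangledown^{(t-1)}(x)$. Next I would argue that the orbit stabilizes within $h(L,\preceq)$ steps. Let $t^* \in \{0, 1, \ldots\}$ be the smallest index with $f_\triangledown^{(t^*+1)}(x) = f_\triangledown^{(t^*)}(x)$; if no such index existed, the infinite sequence of strict inequalities would produce an arbitrarily long strictly descending chain in the finite lattice, which is impossible. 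For $t < t^*$ each inequality is strict, so $x \succ f_\triangledown(x) \succ \cdots \succ f_\triangledown^{(t^*)}(x)$ is a strictly descending chain of length $t^*$, which forces $t^* \leq h(L, \preceq)$. Consequently $f_\triangledown^{(h(L,\preceq))}(x) = f_\triangledown^{(t^*)}(x)$, and this element is a fixed point by definition of $t^*$; it clearly still precedes $x$.

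It remains to verify greatness. If $y \in \operatorname{Fix}(f_\triangledown)$ with $y \preceq x$, then by isotonicity $y = f_\triangledown(y) \preceq f_\triangledown(x)$, and iterating gives $y = f_\triangledown^{(t)}(y) \preceq f_\triangledown^{(t)}(x)$ for every $t$; taking $t = h(L,\preceq)$ yields $y \preceq f_\triangledown^*(x)$, as required. This shows $f_\triangledown^*(x)$ is the greatest fixed point preceding $x$ and that the limit is attained after at most $h(L,\preceq)$ applications of $f_\triangledown$.

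Part (2) follows by dualizing: on the lattice $(L, \succeq)$ the map $f_\triangleup$ is decreasing isotone, the height is unchanged, and a greatest fixed point preceding $x$ in the dual is exactly a least fixed point succeeding $x$ in the original. I do not expect any real obstacle here; the only point requiring a little care is the chain-length counting in step two, namely making sure that one uses the strictly descending prefix rather than the whole weakly descending sequence to bound $t^*$ by the height.
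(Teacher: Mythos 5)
Your proposal is correct and follows essentially the same route as the paper's proof: show the iterates descend, bound the stabilization time by the lattice height via a strictly descending chain, verify maximality by iterating $y = f_\triangledown(y) \preceq f_\triangledown^{(t)}(x)$, and dispatch part (2) by symmetry/duality. You merely spell out in more detail the steps the paper compresses into ``each iteration must yield predecessors.''
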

\begin{proof}
	We only prove the claim for decreasing isotone maps, because a symmetric argument applies to increasing ones.
	
	The limit $f^*_{\triangledown}(x)$ of $x \in L$ must be a fixed point of $f$ and must have been reached after $h(L, \preceq)$ iterations, since each iteration must yield predecessors. Finally, any fixed point $y \preceq x$ precedes $f^*_{\triangledown}(x)$: Since $y \preceq x$, isotonicity implies $y = f_{\triangledown}(y) \preceq f_{\triangledown}(x)$, so $y \preceq f_{\triangledown}(x)$. Repeating this argument, we obtain $y \preceq f^{(t)}_{\triangledown}(x)$ for any $t \geq 1$, hence $y \preceq f^{(h(L, {\preceq}))}_{\triangledown}(x) = f^*_{\triangledown}(x)$.
\end{proof}

While limits under decreasing or increasing isotone maps can be found through iterative application of the map, we note that this is not necessarily optimal. For example, much more efficient algorithms are known for some models of role or behavioral equivalence; e.g., see \citep{pt-tpra-87} for regular or bisimulation equivalence, \citep{cc-pg-82,pt-tpra-87} for equitable partitions, \citep{hhk-csfig-95} for simulations and bisimulations, or \citep{bp-tdine-95} for ready simulations.

\section{Query complexity} \label{section:hardness}

In this section, we examine the hardness of listing Tarski fixed points by analyzing the difficulty of deciding whether an isotone map $f: L \rightarrow L$ on a lattice $(L, \preceq)$ has $k$ or more fixed points for some constant $k \in \mathbb{N}$ in terms of query complexity: Isotone maps are given as \emph{oracles}, which means an algorithm can only discern the structure of $f$ by performing queries to evaluate $f(x)$ for elements $x$ or exploiting prior knowledge on the general structure of $f$ (such as its isotonicity). In this section, we show worst-case lower and upper bounds on the number of queries to the isotone map for deterministic or bounded-error randomized algorithms solving the fixed point decision problem within this setting. A bounded-error randomized algorithm is a (Monte Carlo) algorithm that must output the correct result for each input with at least probability $\frac{2}{3}$.\footnote{One might wonder why this question is not analyzed in terms of NP-hardness in this section as well. The main reason is that NP-hardness proofs in this lattice-agnostic setting require additional assumptions on the existence of specific substructures in the underlying lattice and how easy they are to work with computationally in particular ways. These assumptions are very technical, and it is hard to assess their significance and implications. Rather, NP-hardness results and proofs are easier to express and interpret for specific kinds of lattices, as we do in \Cref{section:relationallattices} for the lattices of equivalences, quasiorders and binary relations. Furthermore, the proofs there exemplify how the proof techniques for query complexity in this section can often be adapted to the question of NP-hardness as well.}

\subparagraph*{Up to two fixed points.} By Tarski's theorem (\Cref{theorem:tarski}), we immediately have:

\begin{proposition} \label{proposition:1fixedpoint_oracle}
	There is a deterministic algorithm that makes
	\begin{itemize}
		\item $0$ queries to decide whether there is a fixed point, and
		\item $\mathcal O(h(L, \preceq))$ queries to find a fixed point of an isotone map $f: L \rightarrow L$.
	\end{itemize}
\end{proposition}

\begin{proof}
	By Tarski's theorem, any isotone map on a finite lattice has a fixed point. Moreover, Tarski's theorem states that the  maximum fixed point of $f$ is also the maximum fixed point of $f_\triangledown: x \mapsto x \operatorname{\sqcap} f(x)$. By \Cref{lemma:limitandfixedpoints}, we find this maximum fixed point $f^*_{\triangledown}(\top)$ after at most $h(L, \preceq)$ applications of $f_{\triangledown}$.
\end{proof}

Two fixed points, namely the maximum and minimum, can also be found by performing a similar number of queries.

\begin{proposition} \label{proposition:2fixedpoint_oracle}
	There is a deterministic algorithm that decides the existence of two or more fixed points in $\mathcal O(h(L, \preceq))$ queries.
\end{proposition}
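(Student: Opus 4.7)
The plan is to check whether the minimum and maximum fixed points of $f$ coincide, since by Tarski's theorem (\Cref{theorem:tarski}) the fixed points of $f$ form a lattice with its own minimum and maximum; so $f$ has two or more fixed points if and only if these differ. Computing each requires $\mathcal O(h(L,\preceq))$ queries, which gives the claimed bound.

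Concretely, I would associate to $f$ the decreasing isotone map $f_{\triangledown}: x \mapsto x \operatorname{\sqcap} f(x)$ and the increasing isotone map $f_{\triangleup}: x \mapsto x \operatorname{\sqcup} f(x)$. Both are isotone as compositions of isotone operations, and decreasing, respectively increasing, by construction. Their fixed points coincide with the postfixed and prefixed points of $f$: $f_{\triangledown}(x)=x$ iff $x\preceq f(x)$, and $f_{\triangleup}(x)=x$ iff $f(x)\preceq x$. Combined with Tarski's characterization $\bigsqcup\{x \suchthat x\preceq f(x)\}\in\operatorname{Fix}(f)$ and $\bigsqcap\{x \suchthat f(x)\preceq x\}\in\operatorname{Fix}(f)$ in \Cref{theorem:tarski}, this identifies the maximum fixed point of $f_{\triangledown}$ with the maximum fixed point of $f$, and the minimum fixed point of $f_{\triangleup}$ with the minimum fixed point of $f$.

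The algorithm then computes $y_{\max} \vcentcolon= f^{*}_{\triangledown}(\top)$ and $y_{\min} \vcentcolon= f^{*}_{\triangleup}(\bot)$ and answers ``yes'' iff $y_{\min}\neq y_{\max}$. By \Cref{lemma:limitandfixedpoints}, $f^{*}_{\triangledown}(\top) = f_{\triangledown}^{(h(L,\preceq))}(\top)$ is the greatest fixed point of $f_{\triangledown}$ preceding $\top$, hence the overall maximum fixed point of $f_{\triangledown}$, which by the preceding paragraph equals the maximum fixed point of $f$; the symmetric statement holds for $y_{\min}$. Each iteration of $f_{\triangledown}$ or $f_{\triangleup}$ uses exactly one query to $f$ (the meet and join with the argument are free lattice operations), so the total query count is at most $2\, h(L,\preceq)\in \mathcal O(h(L,\preceq))$.

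There is no real obstacle here; the only subtlety worth flagging is that correctness relies on the identification between fixed points of $f$ and of its associated decreasing/increasing maps on the extremes, which is exactly the content of Tarski's theorem together with \Cref{lemma:limitandfixedpoints}, and that ``fewer than two fixed points'' is equivalent to ``minimum equals maximum'' on a nonempty fixed point set, the latter being guaranteed by \Cref{theorem:tarski}.
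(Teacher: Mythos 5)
Your proposal is correct and follows essentially the same route as the paper: compute the maximum fixed point as $f^{*}_{\triangledown}(\top)$ for $f_{\triangledown}: x \mapsto x \sqcap f(x)$, the minimum fixed point symmetrically via $f_{\triangleup}$ from $\bot$, and compare them, using Tarski's theorem and \Cref{lemma:limitandfixedpoints} exactly as the paper does in \Cref{proposition:1fixedpoint_oracle,proposition:2fixedpoint_oracle}. You merely spell out in more detail the identification of $\operatorname{Fix}(f_{\triangledown})$ and $\operatorname{Fix}(f_{\triangleup})$ with the sets $\{x : x \preceq f(x)\}$ and $\{x : f(x) \preceq x\}$, which the paper leaves implicit.
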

\begin{proof}
	We can find the maximum fixed point with $\mathcal O(h(L, \preceq))$ queries as described in the proof of \Cref{proposition:1fixedpoint_oracle}. By a symmetric procedure, we can also search the minimum fixed point. Finally, we decide on the number of fixed points: If the maximum and minimum fixed points are equal, then there is only one fixed point, otherwise there are at least two.
\end{proof}

\subparagraph*{Three or more fixed points.} The necessary number of queries can increase drastically when testing for three or more fixed points. We first establish the lattice width as a lower bound for the class of arbitrary isotone maps on a lattice $(L,\preceq)$. This result implies an exponential lower bound for many relevant lattices.

\begin{theorem} \label{theorem:3fixedpoints_oracle_isotone}
	Any deterministic or bounded-error randomized algorithm needs $\Omega(w(L, \preceq))$ queries in the worst case to decide whether an isotone map has $k\geq3$ fixed points.
\end{theorem}

\begin{proof}
	Let $A$ be a largest antichain in lattice $L$. We now describe a class of suitable isotone maps with two or three fixed points; a proof for $k>3$ fixed points can be obtained by adding trivial fixed points in the constructed map close to the lattice maximum $\top$ or minimum $\bot$ or by choosing $k-3$ elements of $A$ as additional fixed points. Note that selecting elements of $A$ as additional fixed points does not affect the asymptotics, since $k$ is a constant.
	
	Since $A$ is maximal, any lattice element $x \in L \setminus A$ must either precede or succeed some elements in the antichain. Now consider the set of isotone maps of the following form:
	\begin{itemize}
		\item Lattice maximum $\top$ and minimum $\bot$ are fixed points.
		\item If $x \in L \setminus A$ precedes some $y \in A$, then map it to $\bot$, else map it to $\top$.
		\item At most one element $x \in L$ in the antichain is mapped to itself, all other antichain elements are mapped to $\bot$.
	\end{itemize}
	It is easy to verify that all maps of this form are isotone. Furthermore, these maps are identical everywhere except for the existence and position of a third fixed point among the antichain elements. Thus, deciding the fixed point problem for maps of the constructed form is equivalent to performing an unstructured search on the antichain. Since any deterministic and bounded-error randomized algorithm requires a linear number of queries to perform an unstructured search, this establishes a lower bound of $\Omega(|A|) = \Omega(w(L,\preceq))$.
\end{proof}

By adapting this proof to other suitable antichains, we also obtain lower bounds for increasing and decreasing isotone maps. Note that these translate into lower bounds for deciding the existence of three or more fixed points if we substitute $s$ with $\top$ or $\bot$ in the following theorem.

\begin{theorem} \label{theorem:3fixedpoints_oracle_decreasingisotone}
	Any deterministic or bounded-error randomized algorithm needs $\Omega({\operatorname{lcv}(s)})$ (respectively $\Omega({\operatorname{ucv}(s)})$) queries to decide whether a decreasing (increasing) isotone map has $k\geq 2$ fixed points strictly preceding (succeeding) some lattice element $s \in L$.
\end{theorem}
\begin{proof}
	Since the proofs are symmetrical, we show the bound for decreasing isotone maps only. We construct a set of maps along the lines of the proof of \Cref{theorem:3fixedpoints_oracle_isotone}: We select the set of lower covers of the selected element $s$ as the antichain $A$ instead of the largest antichain, and we additionally make $s$ and all lattice elements not preceding $s$ fixed points. The resulting maps are isotone and decreasing. Since deciding the fixed point problem requires an unstructured search among the lower covers of $s$,  $\Omega({\operatorname{lcv}(s)})$ queries are necessary.
\end{proof}

\section{Enumeration algorithms} \label{section:algorithm}

The results on query complexity in the previous section suggest that searching fixed points of decreasing or increasing maps is easier, as the obtained lower bounds that are often smaller than those of arbitrary isotone maps by orders of magnitudes. We now present two enumeration algorithms that run significantly faster than extensive search by exploiting the additional properties of such maps. For the sake of conciseness, we describe the algorithms for decreasing isotone maps only, however, it is straightforward to rewrite them for increasing maps by replacing order-related concepts with those of opposite meaning (such as lower by upper covers or maxima by minima).

The first algorithm below lists fixed points by performing a combined depth-first search on the graph structures defined by the fixed point lattice and underlying lattice. The other algorithm applies the classic flashlight search~(e.g., \cite{ms-eespco-16,rt-bbalcpst-75}) on lattices whose elements are decomposable into multiple dimensions and projections to fewer dimensions effectively partition the underlying lattice in a beneficial way (as explained below). Since these algorithms build on different structural properties of the underlying lattice, one can be more performant than the other for different kinds of lattices and isotone maps, if both are applicable at all. In \Cref{section:relationallattices} for example, we discuss when each algorithm achieves better runtime bounds on the lattices of binary relations, quasiorders or equivalences.

\subsection{Depth-first search}

A lattice can be interpreted as a graph, with directed edges between elements according to their ordering in the lattice. One approach to listing fixed points is to traverse the graphs of the fixed point lattice and the underlying lattice suitably in alternation. This search strategy operates for a decreasing map $f_{\triangledown}: L \to L$ as follows:
\begin{enumerate}
	\item The maximum fixed point $f^*_{\triangledown}(s)$ preceding a starting element $s \in L$ is marked as discovered.
	\item While there is still a fixed point that is marked as discovered but has not been output yet, we choose one such fixed point $x$ and output it. For each lower cover $y \lessdot x$ in the underlying lattice $(L, \preceq)$, we identify its limit $f^*_{\triangledown}(y)$ (i.e., the maximum fixed point preceding $y$), and mark it discovered if it has not been marked before.
\end{enumerate}

In this way, all fixed points on finite lattices preceding $s$ are enumerated, no matter the order in which fixed points are processed. However, the processing order affects runtime bounds and space requirements. For example, Echenique's enumeration algorithm for Nash equilibria of supermodular games~\citep{e-faegs-07} applies this search strategy and processes fixed points in breadth-first order. However, this processing offers few benefits in memory and runtime requirements compared to any other processing order: Lots of discovered fixed points must still be stored explicitly to avoid repeated processing of fixed points, and the worst-case delay bound stays the same (neglecting the cost of the processing itself).

On the other hand, searching in depth-first order can have major benefits:

\begin{enumerate}
	\item The number of discovered but not completely processed fixed points is bounded by the height of the fixed point lattice, implying much better delay bounds if the lattice height is small.
	\item The set of discovered and processed fixed points follow from the previously searched regions in the underlying lattice, and these in turn can be compactly represented through the active path of the depth-first search. Thus, memory requirements are reduced to storing information about the active search path.
\end{enumerate}

From the active path, it can be deduced as follows whether a fixed point must have already been discovered before: Let $\sqsubset_x$ denote the order in which lower covers of a fixed point $x \in L$ are enumerated, i.e., $y \sqsubset_x z$ means that $y \lessdot x$ is enumerated strictly before $z \lessdot x$ at fixed point $x$. For a fixed point $x$ and the last enumerated lower cover $y \lessdot x$, we can check using $\sqsubset_x$ whether a lattice element $z$ must have been found by depth-first search at this fixed point for any prior processed cover: This is the case if and only if there is a lower cover $w \lessdot x$ that precedes $z$, $z \preceq w$, and that is enumerated before $y$ among lower covers, $w \sqsubseteq_x y$. By performing such tests along the whole active path, we can determine if a fixed point has been found before or not.

\subparagraph*{Algorithm.} The enumeration algorithm is given in \Cref{alg:nincisotoneenumeration2}, where the argument $x$ of $\textsc{depthfirstsearch}()$ denotes the most recently found fixed point, while $P$ represents the active path of the search as pairs of fixed points and their most recently enumerated lower covers. The algorithm recurses the depth-first search on a fixed point immediately below $x$ when it has not been discovered by some other depth-first trajectory before (as deduced from the active path), thus guaranteeing that all fixed points preceding $s$ must have been discovered when the algorithm completes. 

\algrenewcommand\algorithmicrequire{\textbf{Input:}}
\algrenewcommand\algorithmicensure{\textbf{Output:}}
\begin{algorithm}[ht]
	\caption{Depth-first search enumeration algorithm\label{alg:nincisotoneenumeration2}}
	\begin{algorithmic}[1]
		\Require{Decreasing isotone map $f_{\triangledown}: L \rightarrow L$ defined on lattice $(L, \preceq)$ \newline
			Upper bound $s \in L$ on fixed points to enumerate \newline (e.g., $\top$ for enumerating all fixed points)}
		\Ensure{Fixed points of $f_{\triangledown}$ preceding $s$}
		
		\State start enumeration by calling \Call{depthfirstsearch}{$f^*_{\triangledown}(s), \emptyset$}\label{algline:nincisoenumeration2:start}
		
		\Function{depthfirstsearch}{fixed point $x$, active path $P$}
		\State \textbf{output} $x$\label{algline:nincisoenumeration2:output}
		\For{$y \lessdot x$ in order $\sqsubset_x$}\label{algline:nincisoenumeration2:loop}
		\State{$P' \gets P \cup \{(x,y)\}$}
		\State $z \gets f^*_{\triangledown}(y)$
		\If{for all $(u, v) \in P'$ and $w \lessdot u$ with $w \sqsubset_{u} v$ we have $z \not\preceq w$}\label{algline:nincisoenumeration2:test}
		\State \Call{depthfirstsearch}{$z, P'$}\label{algline:nincisoenumeration2:recursivecall}
		\EndIf
		\EndFor
		\EndFunction
	\end{algorithmic}
\end{algorithm}

The algorithm's properties are summarized in \Cref{theorem:nincisotoneenumeration2}. Note that the runtime bounds assume lower covers of a lattice element are listed in $\mathcal O(\operatorname{lcv}(L, \preceq\nolinebreak)\operatorname{size}(L))$ time, and that we can test for lattice elements $x, y, \in L, z \lessdot y$ in $\mathcal O(\operatorname{size}(L))$ (amortized) time whether there is  $w \lessdot y$ such that $x \preceq w \sqsubset_x z$. We also assume that these operations need at most $\mathcal O(\operatorname{size}(L))$ additional space. These bounds can be met on the lattices of equivalences, quasiorders, binary relations, and many other lattices.

\begin{theorem} \label{theorem:nincisotoneenumeration2}
	\Cref{alg:nincisotoneenumeration2} enumerates all fixed points of a decreasing isotone map $f_{\triangledown}$ preceding $s \in L$ with delay of $\mathcal O(h(\operatorname{Fix}(f_{\triangledown}), \preceq)\operatorname{lcv}(L, \preceq)(t(f^*_{\triangledown})+h(\operatorname{Fix}(f_{\triangledown}), {\preceq})\operatorname{size}(L)))$ and in space $\mathcal O(h(\operatorname{Fix}(f_{\triangledown}), {\preceq})\operatorname{size}(L))$, spending on average $\mathcal O(\operatorname{lcv}(L, \preceq)(t(f^*_{\triangledown})+h(\operatorname{Fix}(f_{\triangledown}), {\preceq})\operatorname{size}(L)))$ time per output.
\end{theorem}

\begin{proof}
	\emph{Correctness.} 
	The algorithm searches all fixed points immediately preceding $x$ in the fixed point lattice and then recurses on these immediately preceding fixed points to find all fixed points preceding $x$, avoiding all those fixed points during the search that have been discovered. We show this by proving the following algorithm invariant by induction on the number of predecessors of $x$:
	
	At the start of an iteration of the for loop in line~\ref{algline:nincisoenumeration2:loop}, a fixed point $p$ has been output if and only if there is a pair of fixed point and lower cover $(u,v) \in P \cup \{(x,y)\}$ on the active search path s.t. $u=p$ or there is $w \lessdot u~\text{s.t.}~p \preceq w\sqsubset_{u} v$.
	
	Clearly, this invariant holds for the fixed point $x = \bot$.
	
	Now suppose that the invariant is satisfied for the current iteration of the for loop in line~\ref{algline:nincisoenumeration2:loop}. This implies that fixed point $z = f^*_{\triangledown}(y)$ and its predecessors have been output before if there exist $(u, v) \in P'$ and $w \lessdot u$ such that $z \preceq w \sqsubset_{u} v$. Otherwise, \textsc{depthfirstsearch}() is called in line~\ref{algline:nincisoenumeration2:recursivecall}. Since the invariant is also fulfilled for the first iteration in this recursive call, we have by induction that a fixed point $p$ has been output after this call to \textsc{depthfirstsearch}() if and only if $p \preceq z$ or  there exist $(u, v) \in P'$ and $w \lessdot u$ such that $z \preceq w \sqsubset_{u} v$. Since $z$ is the greatest fixed point preceding $y$ (\Cref{lemma:limitandfixedpoints}), this becomes equivalent to the invariant when moving to the next lower cover of $x$ in order $\sqsubset_{x}$ at the start of the next iteration.
	
	After the for loop has ended, it follows from the invariant that fixed points $p$ have been output if and only if $p \preceq x$ or there exist $(u, v) \in P$ and $w \lessdot u$ such that $z \preceq w \sqsubset_{u} v$. Thus, all fixed points preceding $s$ must have been output by the call to \textsc{depthfirstsearch}() in line~\ref{algline:nincisoenumeration2:start}, as the invariant is satisfied at the start of this call.
	
	\emph{Space and running time.} \textsc{depthfirstsearch}() is called in line~\ref{algline:nincisoenumeration2:recursivecall} for a fixed point preceding $x$, so the call tree has at most depth $h(\operatorname{Fix}(f_{\triangledown}), {\preceq})+1$.
	Hence, \Cref{alg:nincisotoneenumeration2} requires $\mathcal O(h(\operatorname{Fix}(f_{\triangledown}), {\preceq})\operatorname{size}(L))$ space, provided that memory to represent the active path $P$ is shared among recursive calls.
	
	Each call leads to at most $\operatorname{lcv}(L, \preceq)$ queries for limits under $f_{\triangledown}$, and a fixed point is output at the beginning of a call to search(). Moreover, the remaining running time is dominated by the tests in line~\ref{algline:nincisoenumeration2:test} to check if the found fixed point is newly discovered, of which there are at most $h(\operatorname{Fix}(f_{\triangledown}), {\preceq})$. Hence,  $\mathcal O(\operatorname{lcv}(L, \preceq)(t(f^*_{\triangledown})+h(\operatorname{Fix}(f_{\triangledown}), {\preceq})\operatorname{size}(L)))$ time is spent per output on average. Finally, we might have to fully backtrack the active path before terminating or finding a new fixed point, implying a delay of $\mathcal O(h(\operatorname{Fix}(f_{\triangledown}), {\preceq})\operatorname{lcv}(L, {\preceq})(t(f^*_{\triangledown})+h(\operatorname{Fix}(f_{\triangledown}), {\preceq})\operatorname{size}(L)))$.
\end{proof}

\subsection{Flashlight search}

Flashlight search can offer an alternative strategy to enumerate fixed points, if the elements of the lattice are composed of several dimensions and dividing the elements according to their projections to some fixed lower number of dimensions yields a partition into convex subsets. As we show in this section, this can yield a performant enumeration strategy if these convex subsets have few maximal elements that are easy to find.

\subparagraph*{Notation.} We introduce some additional notation to describe the algorithm below. For the sake of notational simplicity, we assume in this section that lattice elements are represented as $p$-dimensional vectors, i.e., $L \subseteq A^p$ for some base set $A$. For a set of indices $I = \{i_1, \dots, i_k\} \subseteq [p]$ with $i_1 < \dots < i_k$ and a vector $x$, $x_I$ denotes the vector $(x_{i_1}, \dots, x_{i_k})$ and is called the \emph{projection} of $x$ on $I$. For a vector $x$ of dimension $q \leq p$, we denote all lattice elements extending it by $\operatorname{ext}(x) = \{ y \in L \suchthat y_{[q]} = x \}$, and write $\operatorname{maxext}(x)$ for the set of maximal elements in $\operatorname{ext}(x)$ according to the lattice ordering. Finally, $\operatorname{widen}(x) = \{ (x, a) \suchthat a \in A \wedge \operatorname{ext}(x,a) \neq \emptyset \}$ denotes the set of widenings of $x$, which are those vectors of $q+1$ dimensions that are projections of extensions of $x$. We write $|{\operatorname{widen}}|$ and $|{\operatorname{maxext}}|$ for the maximum number of widenings and maximal extensions of any lower-dimensional projection of a lattice element $x\in L$.

\subparagraph*{Algorithm.} Fixed points of a decreasing isotone map $f_{\triangledown}$ are found by slowly building up lattice elements, adding one dimension at a time to lower-dimensional projections. For each projection, the algorithm solves the extension problem: Does there exist a fixed point of $f_{\triangledown}$ that has this projection? If not, it skips generating higher-dimensional extensions of this projection, avoiding unnecessary work.

Generally, the viability of flashlight search heavily depends on the difficulty of this extension problem. For this specific task, we can rewrite it into a more manageable problem if the extensions $\operatorname{ext}(v)$ form a convex set for any projection $v$ of dimension $q \leq p$, i.e., for any extensions $x, y \in \operatorname{ext}(v)$ and lattice element $z \in L$, we have that $x \preceq z \preceq y$ implies $z_{[q]} = v$. Under this assumption, the following lemma holds:

\begin{lemma} \label{lemma:flashlight_maximalelements}
	Let $f_{\triangledown}: L \rightarrow L$ be a decreasing isotone map. For a vector $v$ of dimension $q$, there is a fixed point of  $f_{\triangledown}$ in  $\operatorname{ext}(v)$ if and only if there is a maximal element $y$ of $\operatorname{ext}(v)$ s{.}t{.} $v = (f^*_{\triangledown}(y))_{[q]}$.
\end{lemma}
\begin{proof}
	Suppose there is a fixed point $x \in \operatorname{ext}(v)$, and let $y$ be a maximal element of $\operatorname{ext}(v)$ s{.}t{.} $x \preceq y$. By induction on isotonicity and the decreasing property of $f_{\triangledown}$, we have $x = f^*_{\triangledown}(x) \preceq f^*_{\triangledown}(y) \preceq y$. Since $\operatorname{ext}(v)$ is convex, it follows that $(f^*_{\triangledown}(y))_{[q]} = v$.
\end{proof}

\Cref{alg:backtracksearch} describes the procedure in more detail, and \Cref{theorem:backtracksearch} summarizes its properties under the assumption that generating the next projection with one more dimension as well as enumerating all maximal extensions takes at most $\mathcal O(|{\operatorname{maxext}}|\cdot t(f^*_{\triangledown}))$ time.

\begin{algorithm}[ht]
	\caption{Flashlight search enumeration\label{alg:backtracksearch}}
	\begin{algorithmic}[1]
		\Require{Decreasing isotone map $f_{\triangledown}: L \rightarrow L$ defined on lattice $(L, \preceq)$ with $L \subseteq A^p$}
		\Ensure{Fixed points of $f_{\triangledown}$}
		
		\State start enumeration by calling \Call{flashlightsearch}{$()$}\label{algline:backtracksearch:start}
		
		\Function{flashlightsearch}{projection $x \in A^q$ with $0 \leq q < p$}
		\For{$y \in \operatorname{widen}(x)$}\label{algline:backtracksearch:outerloop}
		\If{there is $z \in \operatorname{maxext}(y)$ s{.}t{.} $y = (f_{\triangledown}^*(z))_{[q+1]}$}\label{algline:backtracksearch:outerif}
		\If{$q + 1 = p$}
		\State \textbf{output} $y$
		\Else
		\State \Call{flashlightsearch}{$y$}
		\EndIf
		\EndIf
		\EndFor
		\EndFunction
	\end{algorithmic}
\end{algorithm}

\begin{theorem} \label{theorem:backtracksearch}
	\Cref{alg:backtracksearch} enumerates all fixed points of $f_{\triangledown}$ with $\mathcal O(p \nolinebreak \cdot \nolinebreak |{\operatorname{widen}}|\cdot |{\operatorname{maxext}}|\cdot t(f^*_{\triangledown}))$ delay.
\end{theorem}
\begin{proof}
	The algorithm clearly outputs each fixed point once. In the worst case, the algorithm performs the following steps between any two outputs: The algorithm evaluates projections with successively smaller number of dimensions following by projections with successively larger dimension to find the next fixed point. This means the algorithm potentially enumerates the widenings of at most two vectors for each dimension $1 \leq q \leq p$. For each widening in turn, it might enumerate all maximal extensions and evaluate $f^*_{\triangledown}$ for each maximal extension. This gives the delay bound.
\end{proof}

\subsection{Handling arbitrary isotone maps}

\Cref{alg:nincisotoneenumeration2,alg:backtracksearch} utilize that supplied maps are increasing or decreasing, and thus are not immediately applicable for isotone maps which are neither decreasing nor increasing. However, an isotone map $f: L \rightarrow L$ is naturally associated with the decreasing isotone map $f_{\triangledown}: x \mapsto f(x) \operatorname{\sqcap} x$ and the increasing isotone map $f_{\triangleup}: x \mapsto f(x) \operatorname{\sqcup} x$. Since $\operatorname{Fix}(f) = \operatorname{Fix}(f_{\triangledown}) \cap \operatorname{Fix}(f_{\triangleup})$, we can find all fixed points by applying an enumeration algorithm to one of these associated maps and selecting only fixed points of $f$ in the output. This filtering approach is  used by Echenique's Nash equilibria enumeration algorithm~\cite{e-faegs-07}.

However, many fixed points of the associated increasing or decreasing maps might be found between actual fixed points of $f$. Hence, the runtime bounds of the enumeration algorithms do not translate into similar bounds for fixed point enumeration of general isotone maps.

\section{Fixed points on relational lattices} \label{section:relationallattices}

In this section, we consider the issue of the enumeration of Tarski fixed points on the relational lattices of equivalences, quasiorders and binary relations, which notions of behavioral or role equivalence are commonly defined on. Specifically, we transfer the algorithms and query complexity results in \Cref{section:hardness,section:algorithm} to these lattices, but also add results on the issue of NP-hardness. Finally, we discuss how these findings apply to models of behavioral equivalence.

\subparagraph*{Properties of quasiorders, equivalences and relational lattices.}  For simplicity, we define $A=[n]$ as the base set of the considered relations for some $n\in\mathbb{N}_+$. An \emph{equivalence} is a symmetric quasiorder. For an equivalence relation ${\approx}\subseteq A\times A$ on a base set $A$, the \emph{equivalence class} of $x\in A$ is the set $[x]_\approx\vcentcolon=\{y\in A\,:\; y\approx x\}$. An \emph{indifference class} of a quasiorder $\preceq$ is an equivalence class of equivalence ${\preceq} \cap {\preceq}^{-1}$, where ${\preceq}^{-1}$ is the inverted quasiorder $x \preceq^{-1} y \iff y \preceq x$.

We write $\mathcal P$ for the set of equivalences, $\mathcal Q$ for the set of quasiorders, and $\mathcal R$ for the set of binary relations defined on $A$. These sets form lattices under the refinement ordering $\subseteq$. The meets are given by the set intersection of the elements and the joins by the set union for binary relations and the transitive closure of the set union for equivalences and binary relations.

We observe the following bounds on height, lower and upper covers for these lattices:
\begin{lemma} \label{lemma:predecessorandsuccessorcounts}
	\begin{enumerate}
		\item The lattice of equivalence has height $h(\mathcal P, \subseteq) = n-1$ and there are at most $\operatorname{lcv}(\mathcal P, \subseteq), \operatorname{lcv}(\top) \in \Theta(2^n)$ lower and $\operatorname{ucv}(\mathcal P, \subseteq), \operatorname{ucv}(\bot) \in \Theta(n^2)$ upper covers.
		\item An equivalence with $k$ equivalence classes has at most $2^{n-k}$ lower covers.
		\item The lattice of quasiorders has height $h(\mathcal Q, {\subseteq}) \in \Theta(n^2)$ and there are at most \linebreak $\operatorname{lcv}(\mathcal Q, {\subseteq}), \operatorname{lcv}(\top) \in \Theta(2^n)$ lower and $\operatorname{ucv}(\mathcal Q, \subseteq), \operatorname{ucv}(\bot) \in \Theta(n^2)$ upper covers.
		\item A quasiorder with $k$ indifference classes has at most $2^{n-k+1} + \mathcal O(n^2)$ lower covers.
		\item The lattice of binary relations has height  $h(\mathcal R, {\subseteq}) = n^2-1$ and there are at most \linebreak $\operatorname{lcv}(\mathcal R, {\subseteq}),  \operatorname{lcv}(\top) \in \Theta(n^2)$ lower and $\operatorname{ucv}(\mathcal R, {\subseteq}), \operatorname{ucv}(\bot) \in \Theta(n^2)$ upper covers.
	\end{enumerate}
\end{lemma}

\begin{proof}
	We briefly prove the claims for the most complex case of quasiorders only. Similar arguments also apply to equivalences and binary relations.
	
		The maximum quasiorder $\top$ has $2^n-2$ lower covers, and the minimum quasiorder $\bot$ has $n(n-1)$ upper covers. Furthermore,  consider any quasiorder with $k$ indifference classes of sizes $s_1, \dots, s_k$. Any upper cover must order one pair of indifference classes in some additional way, giving an upper bound of $k(k-1) \leq n(n-1)$ for upper covers. Similarly, any lower cover must either remove the strict ordering between a pair of indifference classes, or split an indifference class into two strictly ordered classes. We thus have at most $k(k-1) + \sum_{i=1}^k (2^{s_i}-2) \leq k(k-1)+2\prod_{i=1}^k 2^{s_i-1} \leq k(k-1) + 2^{n-k+1}$ lower covers.
		
		A chain must consist of quasiorders with mutually distinct numbers of contained pairs, so it can have at most size $n^2$. Furthermore, consider the following way to construct a chain: Starting from the quasiorder with only incomparable elements, we first successively make one element less than all others, then a second element less than all others except the first one, and so on. This yields a chain consisting of $n(n-1)/2$ quasiorders.
\end{proof}

\subsection{Enumeration}

We now apply the algorithms from \Cref{section:algorithm} specifically to list fixed points of isotone maps defined on relational lattices.

\subparagraph*{Encoding the lattice elements.} We use the following natural representations of the lattice elements, which are asymptotically optimal in space and suitable for the enumeration algorithms described in \Cref{section:algorithm}.

\begin{itemize}
	\item We describe binary relations and quasiorders by $n^2$-dimensional binary vectors $v \in \{0, 1\}^{n^2}$ such that $(i,j) \in R$ if and only if $v_{(i-1)n + j} = 1$.
	\item We encode equivalences by $n$-dimensional vectors $v \in \{1, \dots, n\}^n$ with $v_1 = 1$ and $v_k \leq \max_{1 \leq \ell < k} v_\ell + 1$ for $k \in \{2, \dots, n\}$ and interpret such a vector to mean that $i$ and $j$ are equivalent if and only if $v_i = v_j$.
\end{itemize}

With these representations, binary relations and quasiorders have size $\Theta(n^2)$ and equivalences have size $\Theta(n)$.

\subparagraph*{Depth-first search.} This algorithm requires a way to enumerate the lower and upper covers of an  element such that it can be tested efficiently if a lattice element is a descendant of a cover of this lattice element that was enumerated before. This is straightforward to achieve on the relational lattices; see \cite{m-n-24} for an efficient implementation.

\subparagraph*{Flashlight search.} Projections of the lattice element representations above to fewer dimensions are straightforward and it is easy to prove that projections of the same dimension correspond to a partitioning of the original lattice into convex subsets. In addition, it is not difficult to grow projection vectors in these representations as required by flashlight search, although the procedure is slightly complicated for quasiorders by the fact that additional filtering is required to exclude those binary projection vectors that cannot be extended to any quasiorder, i.e., those projections failing transitivity or reflexivity.

The asymptotic running time of \Cref{alg:backtracksearch} heavily depends on the number of minimum or maximum extensions of a projection. It can be shown straightforwardly that there is always only one minimal extension for all three lattices and always only one maximal one on the lattice of binary relations, since the convex subsets obtained by partitioning according to projections are closed under meet for all three relational lattices and closed under join for binary relations:

\begin{itemize}
	\item \emph{Binary relations:} The maximum and minimum extension of a projection are obtained by extending it only with ones (i.e., adding all pairs not fixed by the projection) or only zeros (i.e., adding no such pairs), respectively.
	\item \emph{Quasiorders:} The minimum extension of a projection is given by the reflexive-transitive closure on the relation defined by only those pairs that must be present according to the projection vector.
	\item \emph{Equivalences:} The minimum extension of a projection is obtained by making all remaining elements non-equivalent to all others.
\end{itemize}

However, the following example shows that there can be exponentially many maximal extensions on the lattices of quasiorders and equivalences: Suppose a projection~$v$ on the first $n/2$ elements makes all of these elements non-equivalent. Then we could add any remaining element to one of the $n/2$ equivalence classes of the first half of elements. Each combination of choices results in a different equivalence that is not coarsened by any other projecting to $v$. Thus, there are at least $\Omega((n/2)^{n/2})$ maximal extensions for projections on the lattice of equivalence, and by analogy on the lattice of quasiorders.

\begin{table}[t]
	\caption[Worst-case bounds on space and delay of \Cref{alg:nincisotoneenumeration2}]{Worst-case delay and space bounds achieved for enumerating fixed points of decreasing or non-increasing isotone maps on the relational lattices using \Cref{alg:nincisotoneenumeration2,alg:backtracksearch}. The given bounds assume that $t(f^*) \in \Omega(\operatorname{size}(L))$ for each lattice $L$. \label{table:latticeenumeration:algorithmruntime}}
	\centering
	\begin{tabularx}{\linewidth}{l|c|c|>{\centering\arraybackslash}X|c|c|>{\centering\arraybackslash}X|}
		\multirow{2}{*}{lattice} & \multicolumn{3}{c|}{decreasing isotone map $f^*_{\triangledown}$} & \multicolumn{3}{c|}{increasing isotone map $f^*_{\triangleup}$} \\
		& Alg. & delay & space &  Alg. & delay & space \\\hline
		equivalences & Alg.~\ref{alg:nincisotoneenumeration2} & $\mathcal O(2^n(t(f^*_\triangledown) + n^2))$ & $\mathcal O(n^2)$ & Alg.~\ref{alg:backtracksearch} & $\mathcal O(n^2 \cdot t(f^*_{\triangleup}))$ & $\mathcal O(n)$ \\
		quasiorders & Alg.~\ref{alg:nincisotoneenumeration2} & $\mathcal O(2^n(t(f^*_\triangledown) + n^4))$ & $\mathcal O(n^4)$ & Alg.~\ref{alg:backtracksearch} & $\mathcal O(n^2 \cdot t(f^*_{\triangleup}))$ & $\mathcal O(n^2)$ \\
		binary relations & Alg.~\ref{alg:backtracksearch} & $\mathcal O(n^2 \cdot t(f^*_{\triangledown}))$ & $\mathcal O(n^2)$ & Alg.~\ref{alg:backtracksearch} & $\mathcal O(n^2 \cdot t(f^*_{\triangleup}))$ & $\mathcal O(n^2)$ \\
	\end{tabularx}
\end{table}

\subparagraph*{Summary.} \Cref{table:latticeenumeration:algorithmruntime} summarizes the best worst-case delay and space bounds achieved for the fixed point enumeration on the relational lattices by either of these algorithms in accordance with \Cref{theorem:nincisotoneenumeration2,theorem:backtracksearch}, and also states which of these algorithms achieves each bound. The bounds for decreasing isotone maps on the lattice of equivalences and quasiorders, slightly improving on the general bounds in \Cref{theorem:nincisotoneenumeration2},  follow from the following observations:

\begin{proposition}
	\begin{enumerate}
		\item \Cref{alg:nincisotoneenumeration2} enumerates at most $2^n$ lower covers between discovered fixed points on the lattice of equivalences.
		\item A variant of \Cref{alg:nincisotoneenumeration2} enumerates at most $2^n$ lower covers between discovered fixed points on the lattice of equivalences.
	\end{enumerate}
\end{proposition}
\begin{proof}
	\begin{enumerate}
		\item Assume that the algorithm is searching a fixed point on some active path. It will bracktrack along this path until it discovers the next fixed point or terminates. Following along the active path, the number of equivalence classes increases by at least one from one fixed point to the next. Thus, the path can have at most length $n$, and since an equivalence with $k$ equivalence classes has at most $2^{n-k}$ lower covers (\Cref{lemma:predecessorandsuccessorcounts}), the algorithm searches at most $\sum_{k=1}^n 2^{n-k} < 2^n$ lower covers until it discovers the next fixed point or terminates.
	\item Consider a variant of the algorithm that enumerates the lower covers of a fixed point that split indifference classes before those lower covers that remove an ordering between indifference classes. Due to this order, a class-splitting lower cover can be skipped in recursive calls if even a single ordering between indifference classes was removed along the active path: Any fixed point preceding the class-splitting lower cover would have been discovered before along another active path splitting the classes first, and thus the tests in \Cref{alg:nincisotoneenumeration2}'s line~\ref{algline:nincisoenumeration2:test} does not succeed for any of them. This means that class splits can only occur for the $k=1$-st up to the $n$-th fixed point on the active path, the number of classes always increasing by at least one. The $k$-th such fixed point has at most $2^{n-k+1}+\mathcal O(n^2)$ lower covers (\Cref{lemma:predecessorandsuccessorcounts}). Following along the active path for the remaining fixed points, only orderings between classes are removed from one fixed point to the next. Thus, there can be at most $\mathcal O(n^2)$ additional fixed points on the active path and since class splits cannot discover new fixed points, the algorithm only has to search the $\mathcal O(n^2)$ ordering-removing lower covers of each of these fixed points. Hence, the algorithm examines at most $\sum_{k=1}^{n} (2^{n-k+1} + \mathcal O(n^2)) + \mathcal O(n^2\cdot n^2) \subseteq \mathcal O(2^n)$ lower covers until it finds the next fixed point or terminates. \qedhere
	\end{enumerate}
\end{proof}

Comparison of the worst-case bounds shows that flashlight search yields the best bounds in those four cases in which each projection of lattice elements is only associated with a single maximal or minimal extension, scaling noticeably better than depth-first search. On the other hand, flashlight search has to enumerate $\Omega ((n/2)^{n/2})$ or more maximal extensions for a projection vector in the other two cases, implying exponentially worse bounds than the ones of depth-first search as listed in \Cref{table:latticeenumeration:algorithmruntime}. This showcases that there are situations where either algorithm can have a substantial advantage over the other when they are both applicable to the lattices.

\subsection{Hardness}

We now examine the hardness of finding fixed points on the relational lattices.

\subparagraph*{Query complexity.} On the relational lattices, the results obtained in \Cref{section:hardness} translate to the bounds listed in \Cref{table:latticeenumeration:lowerbounds_oracle}. By comparing these with the worst-case runtime bounds in \Cref{table:latticeenumeration:algorithmruntime} that are achieved by \Cref{alg:nincisotoneenumeration2,alg:backtracksearch}, we see that lower and upper bounds for increasing and decreasing isotone maps are close, only the exponential bounds showing a small discrepancy at first glance. However, there is still a subtle gap between the lower and upper bounds even if they appear to be the same: While the lower bounds count queries to $f$, the upper bounds are expressed in terms of querying limits $f^*$ of map $f$, computation of which might require another $h(L, {\preceq})$ queries to $f$ in turn. This means there remains a gap of a factor up to $n$ for equivalences and $n^2$ for quasiorders and binary relations.

\begin{table}[t]
	\caption[Query lower bounds for deciding the existence of three fixed points]{Worst-case query lower bounds for deciding the existence of three or more fixed points when isotone maps are given as oracles.\label{table:latticeenumeration:lowerbounds_oracle}}
	\centering
	{\renewcommand\tabularxcolumn[1]{m{#1}}
		\begin{tabularx}{\linewidth}{l|c|>{\centering\arraybackslash}X|c}
			\diagbox{lattice\,\,\,\,\,\,\,\,\,\,\,}{maps\,\,\,\,\,\,\,\,\,\,\,} & decreasing isotone & isotone & increasing isotone \\\hline
			\multirow{2}{*}{equivalences} &  $\Omega(2^n)$ &  $\Omega((n/(\log(n)\cdot e))^n)$ & $\Omega(n^2)$ \\
			& (Thm.~\ref{theorem:3fixedpoints_oracle_decreasingisotone}, \Cref{lemma:predecessorandsuccessorcounts}) & (Thm.~\ref{theorem:3fixedpoints_oracle_isotone}, \citep{rd-snsk-69}) & (Thm.~\ref{theorem:3fixedpoints_oracle_decreasingisotone}, \Cref{lemma:predecessorandsuccessorcounts}) \\\hline
			\multirow{2}{*}{quasiorders} &  $\Omega(2^n)$ &  $\Omega((n/(\log(n)\cdot e))^n)$ & $\Omega(n^2)$ \\
			& (Thm.~\ref{theorem:3fixedpoints_oracle_decreasingisotone}, \Cref{lemma:predecessorandsuccessorcounts}) & (Thm.~\ref{theorem:3fixedpoints_oracle_isotone}, \citep{rd-snsk-69}) & (Thm.~\ref{theorem:3fixedpoints_oracle_decreasingisotone}, \Cref{lemma:predecessorandsuccessorcounts}) \\\hline
			\multirow{2}{*}{binary relations} &  $\Omega(n^2)$ &  $\Omega(2^{n^2}/n)$ & $\Omega(n^2)$ \\
			& (Thm.~\ref{theorem:3fixedpoints_oracle_decreasingisotone}, \Cref{lemma:predecessorandsuccessorcounts}) & (Thm.~\ref{theorem:3fixedpoints_oracle_isotone}, \citep{s-sueuem-28}) & (Thm.~\ref{theorem:3fixedpoints_oracle_decreasingisotone}, \Cref{lemma:predecessorandsuccessorcounts}) \\\hline
		\end{tabularx}
	}
\end{table}

\subparagraph*{NP-hardness.} To investigate the NP-hardness of the fixed point problem, we consider it in the \emph{polynomial map setting}, in which the analysis is restricted to succinctly represented maps. A map $f$ is \emph{succinctly represented} if $f$'s memory representation has size polynomial in $n$ and $f(x)$ is computable in polynomial time for each input~$x$. Note that a solving algorithm is not constrained to only querying function values, but can potentially choose to access and analyze the representation of $f$ as well.

Deciding the existence of two fixed points on the relational lattices can be done in polynomial time using the algorithm described in the proof of \Cref{proposition:2fixedpoint_oracle}. However, we show below that the problem is sometimes NP-hard for three or more fixed points, namely in those cases in which \Cref{table:latticeenumeration:algorithmruntime,table:latticeenumeration:lowerbounds_oracle} list exponential bounds in terms of runtime of the enumeration algorithms and query complexity. These NP-hardness results proven below are summarized in \Cref{table:latticeenumeration:hardness}. Observe that the NP-hardness results imply as well that the extension problem underlying flashlight search cannot be solved in polynomial time for decreasing isotone maps on the lattices of quasiorders and equivalences unless $\mathrm{P} = \mathrm{NP}$.

\begin{table}[t]
	\caption[Complexity of deciding the existence of $k$ fixed points for $k \geq 3$.]{Complexity of deciding the existence of $k$ fixed points on relational lattices in the polynomial map setting for $k \geq 3$.\label{table:latticeenumeration:hardness}}
	\centering
	{\renewcommand\tabularxcolumn[1]{m{#1}}
		\begin{tabularx}{\linewidth}{l|c|>{\centering\arraybackslash}X|c}
			{\diagbox{lattice\,\,\,\,\,\,\,\,\,\,\,}{maps\,\,\,\,\,\,\,\,\,\,\,}} & decreasing isotone & isotone & increasing isotone \\\hline
			\multirow{2}{*}{equivalences} & NP-complete & NP-complete & P \\
			& (Thm.~\ref{theorem:3fixedpoints_equivalence}) & (Thm.~\ref{theorem:3fixedpoints_equivalence}) & (\Cref{table:latticeenumeration:algorithmruntime}) \\\hline
			\multirow{2}{*}{quasiorders} & NP-complete & NP-complete & P \\
			& (Thm.~\ref{theorem:3fixedpoints_quasiorder}) & (Thm.~\ref{theorem:3fixedpoints_quasiorder}) & (\Cref{table:latticeenumeration:algorithmruntime}) \\\hline
			\multirow{2}{*}{binary relations} & P & NP-complete & P \\
			& (\Cref{table:latticeenumeration:algorithmruntime}) & (Thm.~\ref{theorem:3fixedpoints_binaryrelation}) & (\Cref{table:latticeenumeration:algorithmruntime}) \\\hline
		\end{tabularx}
	}
\end{table}

The following hardness proofs adapt the proof strategy from \Cref{section:hardness}: They select a suitable antichain in the underlying lattice and construct an isotone map that places additional fixed points on this antichain if and only if an instance of the 3SAT problem is satisfiable.

\begin{theorem} \label{theorem:3fixedpoints_binaryrelation}
	Deciding the existence of $k$ fixed points of an (arbitrary) isotone map is NP-complete on the lattice of binary relations for $k \geq 3$.
\end{theorem}

\begin{proof}
	We reduce \textsc{3SAT} to the three-fixed point problem on the lattice of binary relations; a reduction for more than three fixed points can be obtained by artificially introducing $k-3$ additional trivial fixed points in the constructed map close to the lattice minimum $\bot$.
	
	Suppose we are given a formula $B$ with $m\geq1$ variables. We choose a base set $A$ of size $n$ such that $n^2 \geq 2m$. Let $i_1, \dots, i_{n^2}$ be all possible pairs of elements on $A$. We define a polynomial-time computable map $f: \mathcal R \to \mathcal R$ by mapping a binary relation $R \in \mathcal R$ as follows:
	
	\begin{enumerate}
		\item If $\{i_{2m+1}, \dots, i_{n^2}\} \cap R \neq \emptyset$, then map $R$ to $\top$.
		\item Otherwise, if $i_{\ell} \in R$ and $i_{\ell+m} \in R$ for some $\ell \in \{1, \dots, m\}$, then map $R$ to $\top$.
		\item Otherwise, if $i_{\ell} \notin R$ and $i_{\ell+m} \notin R$ for some $\ell \in \{1, \dots, m\}$, then map $R$ to $\bot$.
		\item Otherwise, if $I=(i_1 \in R, \dots, i_m \in R)$ is a satisfying assignment of formula $B$, then map $R$ to $R$, else map $R$ to $\bot$.
	\end{enumerate}
	It is easy to verify that $f$ is isotone, can be succinctly represented (since $B$ has at most $n^3$ distinct clauses) and has the trivial fixed points $\bot$ and $\top$. We now show that $f$ has at least three fixed points if and only if formula $B$ is satisfiable.

	Let $I=(x_1, \dots, x_m) \in \{0,1\}^m$ be a satisfying assignment of $B$. Then, the relation \[R = \{ i_\ell \suchthat x_\ell = 1, \ell \in \{1, \dots, m\} \} \cup \{ i_{m+\ell} \suchthat x_\ell = 0, \ell \in \{1, \dots, m\} \}\] is a fixed point of $f$. This fixed point is necessarily different from the trivial minimum and maximum fixed points, as $R$ contains exactly $m$ pairs, while the trivial ones contain $0$ or $n^2$.
	
	Conversely, assume that $f$ has three fixed points. Then let $R$ be a fixed point different from the trivial fixed points $\top$ and $\bot$. By construction of $f$, $R$ must fall into the first case of step 4 above. Thus, $I=(i_1 \in R, \dots, i_m \in R)$ is a satisfying assignment of formula $B$.
\end{proof}

\begin{theorem} \label{theorem:3fixedpoints_equivalence}
	Deciding the existence of $k$ fixed points of a decreasing isotone map is NP-complete on the lattice of equivalences for $k \geq 3$.
\end{theorem}

\begin{proof}
	We reduce from \textsc{3SAT} similar to the proof of \Cref{theorem:3fixedpoints_binaryrelation} and briefly describe the construction for $k=3$ only. For a given formula $B$ with $m\geq1$ variables, we choose base set $A=\{v_1, \dots, v_{m+2}\}$ of size $n = m+2$. The reduction is based on a map $f_\triangledown: \mathcal P \rightarrow \mathcal P$, which maps equivalence ${\approx} \in \mathcal P$ as follows:
	\begin{enumerate}
		\item If ${\approx} = \top$, then map $\approx$ to itself.
		\item If $\approx$ has three or more equivalence classes, then map it to $\bot$.
		\item Otherwise, if $I=(v_1 \in [v_n]_{\approx}, \dots, v_m \in [v_n]_{\approx})$ is a satisfying assignment of $B$, then map $\approx$ to itself, else map it to $\bot$.
	\end{enumerate}
	$f_\triangledown$ is decreasing isotone and can be succinctly represented. Finally, it can be shown that $f_\triangledown$ has three or more fixed points if and only if formula $B$ is satisfiable.
\end{proof}

\begin{theorem} \label{theorem:3fixedpoints_quasiorder}
	Deciding the existence of $k$ fixed points of a decreasing isotone map is NP-complete on the lattice of quasiorders for $k \geq 3$.
\end{theorem}

\begin{proof}
	We perform a reduction from 3SAT on the same lines as in the proof of \Cref{theorem:3fixedpoints_binaryrelation} and only briefly describe the construction for $k = 3$ here. For a given formula $B$ with $m \geq 1$ variables, we choose a base set $A = \{v_1, \dots, v_{m+2}\}$ and size $n = m+2$. We construct a map $f_\triangledown: \mathcal Q \rightarrow \mathcal Q$ by mapping a quasiorder ${\lessapprox} \in \mathcal Q$ as follows:
	
	\begin{enumerate}
		\item If ${\lessapprox} = \top$, map it to itself.
		\item Otherwise, if $\lessapprox$ does not consist of two ordered indifference classes, map $\lessapprox$ to $\bot$.
		\item Otherwise, let $X$ be the only maximal indifference class in $\lessapprox$.
		If $I=(v_1 \in X, \dots, v_m \in X)$ is a satisfying assignment of $B$, then map $\lessapprox$ to itself, else map it to $\bot$.
	\end{enumerate}
	This map $f_\triangledown$ is decreasing isotone, can be succinctly represented and has three or more fixed points if and only if $B$ is satisfiable. 
\end{proof}

\subsection{Application: Listing bisimulations and behavioral equivalences}

The algorithms in \Cref{section:algorithm} can be employed to list instances of some models of behavioral and role equivalence. In this section, we mainly focus on the case of listing bisimulations.

A labeled transition system is a triple $(S, \Lambda, \rightarrow)$ of states $S$, labels $\Lambda$ and labeled transitions~${\rightarrow} \subseteq \nolinebreak S \times \Lambda \times S$; we write $p \overset{\alpha}{\rightarrow} q$ to mean $(p, \alpha, q) \in {\rightarrow}$. A binary relation $R \subseteq S \times S$ is called a \emph{bisimulation}~\cite[p.~88]{m-cc-89} if $(p, q) \in R$ implies that
\begin{itemize}
	\item whenever $p \overset{\alpha}{\rightarrow} p'$, there is $q \overset{\alpha}{\rightarrow} q'$ such that $(p',q') \in R$, and
	\item whenever $q \overset{\alpha}{\rightarrow} q'$, there is $p \overset{\alpha}{\rightarrow} p'$ such that $(p',q') \in R$.
\end{itemize}

Equivalently, bisimulations can also be described as the fixed points of the decreasing isotone map $f: \mathcal{R} \rightarrow \mathcal{R}$ given by
\begin{equation} \label{equation:isotonemap_bisimulation}
	(p, q) \in f(R) \iff \left\{\begin{array}{l} (p, q) \in R,~\text{and} \\ \text{for each}~p \overset{\alpha}{\rightarrow} p'~\text{there exists}~p \overset{\alpha}{\rightarrow} q'~\text{s.t.}~(p',q') \in R,~\text{and} \\ \text{for each}~q \overset{\alpha}{\rightarrow} q'~\text{there exists}~p \overset{\alpha}{\rightarrow} p'~\text{s.t.}~(p',q') \in R \end{array}\right.
\end{equation}

By applying \Cref{alg:backtracksearch} to the isotone map above, we can thus enumerate bisimulations with polynomial delay and space, as the following theorem shows.

\begin{theorem}
	Let $(S, \Lambda, \rightarrow)$ be a finite labeled transition system with $n=|S|$ states and $m=|{\rightarrow}|$ transitions. Then \Cref{alg:backtracksearch} can enumerate bisimulations with $\mathcal O(n^3m)$ delay and $\mathcal O(n^2)$ space.
\end{theorem}
\begin{proof}
	Using the algorithm in~\citep{hhk-csfig-95}, the limit of a binary relation $R$ under the decreasing isotone map in \Cref{equation:isotonemap_bisimulation} can be computed in $\mathcal O(nm)$ time and $\mathcal O(n^2)$ space. The bounds on delay and space follow from  \Cref{table:latticeenumeration:algorithmruntime}.
\end{proof}

In the same way, we can derive polynomial space and delay algorithms for related models of behavioral equivalence like simulation~\cite{m-adsp-71} or ready simulation~\cite{bim-bcbt-95,bp-tdine-95}, and some role equivalences such as ecological equivalence~\cite{be-gcpee-92,be-epc-94}, as well as their extensions to more complex network structures.

On the other hand, regular equivalences are naturally expressed as fixed points of a decreasing isotone map in \Cref{equation:isotonemap_bisimulation} restricted to the lattice of equivalences. Following \Cref{table:latticeenumeration:algorithmruntime}, the best worst-case delay achieved by \Cref{alg:nincisotoneenumeration2,alg:backtracksearch} for listing regular equivalences is the exponential bound $\mathcal O(2^n(n^2+m\log n))$, which is obtained by combining \Cref{alg:nincisotoneenumeration2} with the partition refinement algorithm due to Paige and Tarjan~\cite{pt-tpra-87}. Nevertheless, this runtime bound still constitutes a major improvement over a naive brute force approach with a worst-case delay that is super-exponential in~$n$.

\section{Conclusion} \label{section:conclusion}

In this article, we investigated the computational problem of listing the Tarski fixed points of isotone maps on lattices. We first connected the hardness of the problem to structural properties of the given lattice when isotone maps are given as oracles, showing that any deterministic or bounded-error randomized algorithm must perform queries to the given isotone map at least on the order of the width of the underlying lattice to find three or more fixed points. Likewise, we established that the number of queries are asymptotically bounded from below by the number of lower covers of the lattice maximum or upper covers of the lattice minimum for increasing and decreasing isotone maps.

Next, we presented two enumeration algorithms for Tarski fixed points of increasing or decreasing isotone maps. They are based on depth-first (improving on a prior algorithm for a related problem by Echenique~\cite{e-faegs-07}) and flashlight search, two common strategies for enumeration problems. They run in polynomial space on polynomial-height lattices and generally provide substantially better delay bounds than brute-force search. The algorithms build on different assumptions about the underlying lattice: The depth-first algorithm depends on efficient enumeration of lower or upper covers of lattice elements, while flashlight search needs a suitable decomposition of lattice elements into multiple dimensions. Because of these different requirements, it depends on the properties of the specific lattice and the subclass of isotone maps which of these two algorithms are applicable to a particular Tarski fixed point enumeration problem or provides stronger runtime guarantees.

Echenique~\citep{e-faegs-07} previously proposed an algorithm to list Nash equilibria of super-modular games, which can be easily adapted to enumerate Tarski fixed points. Essentially, it traverses the same graph as the depth-first search enumeration algorithm, but does so in breadth-first fashion. However, enumeration in depth-first search is a major improvement, as it ensures that the number of not completely processed but found fixed points is bounded by the height of the fixed point lattice. As a result, depth-first search has stronger delay bounds than Echenique's algorithm.

While there might be other efficient enumeration algorithms for Tarski fixed points, it seems questionable that the established enumeration strategies of proximity~\citep{cu-npdbmseps-19} or reverse search~\citep{af-rse-96} would be suited to this task. These strategies effectively require some way to invert the isotone map, if they are supposed to not just reduce to the depth-first search algorithm or another graph traversal method that moves in the direction of the lattice ordering. But in the worst case, inverting the map for some lattice elements requires testing most or all of the lattice as potential preimages. Therefore, inversion of the map can be just as costly as enumerating Tarski fixed points by brute force, which evaluates the isotone map at each lattice element at most once.

Finally, we applied the presented hardness results and enumeration algorithms to the lattices of equivalences, quasiorders and binary relations. The algorithms run in polynomial space, and we show that they achieve polynomial delay whenever the problem of finding three or more fixed points is neither NP-hard nor has an exponential lower bound on the number of queries. Besides, we witness cases where depth-first search enumeration yields better runtime bounds than flashlight search and vice versa.

Our findings give rise to a polynomial delay and space algorithm for listing bisimulations. Specifically, the suggested algorithm runs with $\mathcal O(n^3m)$ delay and in $\mathcal O(n^2)$ space on a transition system with $n$ states and $m$ transitions. Similarly, we can also obtain polynomial delay and space algorithms for enumerating instances of some related concepts of behavioral equivalence, such as simulations~\citep{m-adsp-71} and ready simulations~\citep{bim-bcbt-95,bp-tdine-95}, but also some models of role equivalence like ecological equivalence~\citep{be-gcpee-92}, for which no polynomial enumeration algorithms appear to have been described in the literature before. 

However, the presented algorithms do not guarantee polynomial delay enumeration for certain other models of behavioral or role equivalence, including the important case of regular equivalence. Yet even in these cases, they typically represent a major improvement in asymptotic delay bounds over brute force.

Since we derived the worst-case query lower bounds and NP-hardness results over all possible isotone maps or specific subclasses thereof, our findings do not imply that enumeration of Tarski fixed points must be as hard for a specific isotone map. Rather, a specific map might allow for algorithms with better performance characteristics, as these algorithms could for instance exploit additional structural features of this map. Notably, this leaves open the complexity of some problems related to regular equivalences, as they are fixed points of one specific decreasing isotone map on the lattice of equivalences. While finding instances of some subclasses of regular equivalences is NP-hard~\citep{fp-cccra-05,rs-hhidi-01}, these and our findings do not necessarily mean that enumerating arbitrary regular equivalences is hard as well.

\section*{Acknowledgments}
The author would like to thank an anonymous reviewer for their very valuable feedback that led to major improvements to this article.

\end{document}